    \newtheorem{theorem}{Theorem}
    \newtheorem{proposition}[theorem]{Proposition}
    \newtheorem{lemma}[theorem]{Lemma}
    \theoremstyle{definition}
    \newtheorem{remark}{Remark}
\title{An Approximation Algorithm for Two-Edge-Connected Subgraph Problem via Triangle-free Two-Edge-Cover\thanks{%
This work was partially supported by the joint project of Kyoto University and Toyota Motor Corporation,
titled ``Advanced Mathematical Science for Mobility Society'', and
by JSPS KAKENHI Grant Numbers JP20K11692 and JP22H05001. 
}}
\author{Yusuke Kobayashi\thanks{Research Institute for Mathematical Sciences, Kyoto University.
E-mail: \{yusuke, tnoguchi\}@kurims.kyoto-u.ac.jp}
\and Takashi Noguchi\footnotemark[2]
}
\date{}
\begin{document}

\maketitle

\begin{abstract}
    The $2$-Edge-Connected Spanning Subgraph problem (2-ECSS) is one of the most fundamental and well-studied problems in the context of network design. 
    In the problem, we are given an undirected graph $G$, and the objective is to find 
    a $2$-edge-connected spanning subgraph $H$ of $G$ with the minimum number of edges.
    For this problem, a lot of approximation algorithms have been proposed in the literature.  
    In particular, very recently, Garg, Grandoni, and Ameli gave an approximation algorithm for 2-ECSS with factor $1.326$, 
    which was the best approximation ratio.     
    In this paper, we give a $(1.3+\varepsilon)$-approximation algorithm for 2-ECSS, where $\varepsilon$ is an arbitrary positive fixed constant, 
    which improves the previously known best approximation ratio. 
    In our algorithm, we compute a minimum triangle-free $2$-edge-cover in $G$ 
    with the aid of the algorithm for finding a maximum triangle-free $2$-matching given by Hartvigsen. 
    Then, with the obtained triangle-free $2$-edge-cover, we apply the arguments by Garg, Grandoni, and Ameli. 
\end{abstract}

\section{Introduction}
    In the field of survivable network design, a basic problem is to construct a network with minimum cost that satisfies a certain connectivity constraint. 
    A seminal result by Jain~\cite{Jain} provides a $2$-approximation algorithm for a wide class of survivable network design problems.
    For specific problems among them, a lot of better approximation algorithms have been investigated in the literature.      

    In this paper, we study the $2$-Edge-Connected Spanning Subgraph problem (2-ECSS), which is one of the most fundamental and well-studied problems in this context. 
    In 2-ECSS, we are given an undirected graph $G=(V, E)$, and the objective is to find 
    a $2$-edge-connected spanning subgraph $H$ of $G$ with the minimum number of edges.
    It was shown in~\cite{CL,CristG} that 2-ECSS does not admit a PTAS unless ${\rm P}={\rm NP}$.  
    Khuller and Vishkin~\cite{KV} gave a $3/2$-approximation algorithm for this problem, which was the starting point of the study of approximation algorithms for 2-ECSS. 
    Cheriyan, Seb\H{o}, and Szigeti~\cite{CSS} improved this ratio to $17/12$, and later 
    Hunkenschr\"{o}der, Vempala, and Vetta~\cite{VV,HVV} gave a $4/3$-approximation algorithm.
    By a completely different approach, Seb\H{o} and Vygen~\cite{SV} achieved the same approximation ratio. 
    Very recently, Garg, Grandoni, and Ameli~\cite{GGA} improved this ratio to $1.326$  
    by introducing powerful reduction steps and developing the techniques in~\cite{HVV}.
    
    The contribution of this paper is to present a $(1.3+\varepsilon)$-approximation algorithm for 2-ECSS for any $\varepsilon > 0$, 
    which improves the previously best approximation ratio. 
    \begin{theorem}
    \label{thm:main}
        For any constant $\varepsilon >0$, there is a deterministic polynomial-time $(1.3+\varepsilon)$-approximation algorithm for 2-ECSS.
    \end{theorem}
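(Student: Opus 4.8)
The plan is to follow the overall strategy of Garg, Grandoni, and Ameli~\cite{GGA}, but to feed their machinery a better-structured starting subgraph: instead of an arbitrary minimum $2$-edge-cover we start from a minimum \emph{triangle-free} $2$-edge-cover $C$ of $G$, by which we mean a spanning subgraph of minimum size in which every vertex has degree at least $2$ and no connected component is a triangle. Two facts make this choice work. First, $C$ is still a lower bound on the optimum: we may assume that $G$ is $2$-edge-connected (otherwise the instance is infeasible) and that $|V(G)|\ge 4$ (smaller instances are solved directly), and then any optimal solution is a connected spanning subgraph on at least four vertices with minimum degree at least $2$, hence is itself a triangle-free $2$-edge-cover, so $|C|\le {\rm OPT}$. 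Second, every component of a triangle-free $2$-edge-cover has at least four vertices, so $C$ has at most $|V(G)|/4\le |C|/4$ connected components; since triangle components are precisely the bottleneck in the analysis of~\cite{GGA}, forbidding them is what allows the improvement. After discarding infeasible and trivial cases and applying the reduction steps of~\cite{GGA} (checking that they preserve the approximation guarantee), it suffices to produce a $2$-edge-connected spanning subgraph of size at most $(1.3+\varepsilon)|C|$.

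The second step is to compute a minimum triangle-free $2$-edge-cover in polynomial time, and this is where the argument departs from~\cite{GGA}, who only need an ordinary minimum $2$-edge-cover. The key tool is Hartvigsen's polynomial-time algorithm for a maximum triangle-free $2$-matching, that is, a maximum-size subgraph of maximum degree at most $2$ containing no triangle. Via a Gallai-type identity, the minimum size of a triangle-free $2$-edge-cover of an $n$-vertex graph equals $2n$ minus the maximum size of a triangle-free $2$-matching; moreover an optimal triangle-free $2$-matching can be completed in polynomial time to an optimal triangle-free $2$-edge-cover by adding edges at vertices of degree below $2$ while being careful not to create a triangle component.

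The final step is to run the gluing and bridge-covering machinery of~\cite{GGA} starting from $C$: repeatedly add edges of $G$ to merge the components of $C$ and to cover the bridges thereby created, and then delete redundant edges, obtaining a $2$-edge-connected spanning subgraph $H$. The heart of the proof is to re-run the credit/token analysis of~\cite{GGA} under the stronger invariant that no component under consideration is a triangle: since every component has at least four vertices, the tight case becomes a $4$-cycle rather than a triangle, and the amortized number of edges needed to integrate one component into the global solution drops enough to bring the ratio down from $1.326$ to $1.3$. The extra $\varepsilon$ absorbs the cost of handling components and local configurations of size $O_\varepsilon(1)$ essentially optimally, which converts an additive $O_\varepsilon(1)$ loss into a multiplicative $(1+\varepsilon)$ factor once ${\rm OPT}$ is large; instances with bounded ${\rm OPT}$ are solved exactly. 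Combining the bounds gives $|H|\le (1.3+\varepsilon)|C|\le (1.3+\varepsilon)\,{\rm OPT}$.

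I expect the main obstacle to be the re-analysis of the machinery of~\cite{GGA} under the triangle-free invariant: one must verify that each reduction and gluing step preserves, or can re-establish, the absence of triangle components; re-tune the credit assignment so that a $4$-cycle carries enough credit for the $1.3$ bound; and pin down precisely which bounded-size configurations must be resolved optimally in order to extract the $\varepsilon$. A secondary but essential subtlety is that ``triangle-free'' here must mean ``having no triangle component'' rather than ``having no triangle subgraph'': in the latter sense a triangle-free $2$-edge-cover may fail to exist at all (for instance, in a friendship graph every $2$-edge-cover contains a triangle), which would break both the polynomial-time computation and the comparison with ${\rm OPT}$.
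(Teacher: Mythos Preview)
Your high-level plan is the same as the paper's: reduce to $(5/4,\varepsilon)$-structured graphs via~\cite{GGA}, compute a minimum triangle-free $2$-edge-cover $C$ (with ``triangle-free'' meaning no triangle \emph{component}, as you correctly insist), use Hartvigsen's algorithm through a Gallai-type identity to do this in polynomial time, and then feed $C$ into the machinery of~\cite{GGA}. Your lower bound $|C|\le{\rm OPT}$ and your handling of the $\varepsilon$ through the structured-graph reduction are also in line with the paper.

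Where you diverge is in the last step, and there you are making your life much harder than necessary. You propose to \emph{re-run} the credit/token analysis of~\cite{GGA} under the triangle-free invariant and to \emph{re-tune} the credits so that $4$-cycles carry enough. The paper avoids all of this: the relevant lemma of~\cite{GGA} (their Lemma~2.6) already produces, from a semi-canonical $2$-edge-cover $H$ with $t|H|$ edges in triangle 2EC components and $b|H|$ bridges, a $2$-edge-connected spanning subgraph of size at most $\bigl(\tfrac{13}{10}+\tfrac{1}{30}t-\tfrac{1}{20}b\bigr)|H|$. So once $t=0$, the $\tfrac{13}{10}$ bound is immediate, with no new credit analysis at all; it is used as a black box.

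What you are genuinely missing, and what replaces your proposed re-analysis, is an intermediate normalization step: the black-box lemma requires its input to be \emph{semi-canonical} (bounds on block sizes, no simple local improvements, etc.), and a minimum triangle-free $2$-edge-cover need not be. The paper therefore proves that any triangle-free $2$-edge-cover of a $(5/4,\varepsilon)$-structured graph can be converted, in polynomial time and without increasing its size, into a semi-canonical $2$-edge-cover that is still triangle-free. This is a local-replacement argument close to~\cite[Lemma~2.4]{GGA}, but one must check at each step that no triangle component is created. Once you have this, you apply the black-box lemma with $t=0$ and are done; the obstacles you list (re-tuning credits, identifying which bounded-size configurations to solve optimally) do not actually arise.
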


    Our algorithm and its analysis are heavily dependent on the well-developed arguments by Garg, Grandoni, and Ameli~\cite{GGA}. 
    In our algorithm, we first apply the reduction steps given in~\cite{GGA}. 
    Then, instead of a minimum $2$-edge-cover, we compute a minimum \emph{triangle-free} $2$-edge-cover in the graph, 
    which is the key ingredient in our algorithm. 
    We show that this can be done in polynomial time with the aid of the algorithm for finding a maximum triangle-free $2$-matching given by Hartvigsen~\cite{HartD} (see Theorem~\ref{thm:HartD}). 
    Finally, we convert the obtained triangle-free $2$-edge-cover into a spanning $2$-edge-connected subgraph 
    by using the arguments in~\cite{GGA}. 

    Our main technical contribution is to point out the utility of Hartvigsen's algorithm~\cite{HartD} in the arguments by Garg, Grandoni, and Ameli~\cite{GGA}. 
    It should be noted that Hartvigsen's algorithm has not received much attention in this context.

    \paragraph{Related Work}
    A natural extension of 2-ECSS is the $k$-Edge-Connected Spanning Subgraph problem ($k$-ECSS), 
    which is to find a $k$-edge-connected spanning subgraph of the input graph with the minimum number of edges.
    For $k$-ECSS, several approximation algorithms have been proposed, in which approximation factors depend on $k$~\cite{CT2000,GGTW2009,GG2012}.
    We can also consider the weigthed variant of 2-ECSS, in which the objective is to find 
    a $2$-edge-connected spanning subgraph with the minimum total weight in a given edge-weighted graph. 
    The result of Jain~\cite{Jain} leads to a $2$-approximation algorithm for the weighted 2-ECSS, and it is still the best known approximation ratio. 
    For the case when all the edge weights are $0$ or $1$, which is called the \emph{forest augmentation problem}, 
    Grandoni, Ameli, and Traub~\cite{GAT} recently gave a $1.9973$-approximation algorithm. 
    See references in~\cite{GGA,GAT} for more related work on survivable network design problems. 

    It is well-known that a $2$-matching of maximum size can be found in polynomial-time by using a matching algorithm; see e.g.,~\cite[Section 30]{lexbook}.
    As a variant of this problem, the problem of finding a maximum $2$-matching that contains no cycle of length at most $k$, 
    which is called the \emph{$C_{\le k}$-free $2$-matching problem}, has been actively studied. 
    Hartvigsen~\cite{HartD} gave a polynomial-time algorithm for the $C_{\le 3}$-free $2$-matching problem (also called the \emph{triangle-free $2$-matching problem}), and
    Papadimitriou showed the NP-hardness for $k \ge 5$ (see \cite{CP80}). 
    The polynomial solvability of the $C_{\le 4}$-free $2$-matching problem has been open for more than 40 years. 
    The edge weighted variant of the $C_{\le 3}$-free $2$-matching problem is also a big open problem in this area, 
    and some positive results are known for special cases~\cite{HL13,Kob10,PW21IPL,Kob22}. 
    See references in~\cite{Kob22} for more related work on the $C_{\le k}$-free $2$-matching problem.

\section{Preliminary}

Throughout the paper, we only consider simple undirected graphs, i.e., every graph has neither self-loops nor parallel edges.%
\footnote{It is shown in~\cite{HVV} that this assumption is not essential when we consider $2$-ECSS.} 
A graph $G=(V, E)$ is said to be \emph{$2$-edge-connected} if $G \setminus \{e\}$ is connected for any $e \in E$, 
and it is called \emph{$2$-vertex-connected} if $G \setminus \{v\}$ is connected for any $v \in V$ and $|V| \ge 3$. 
For a subgraph $H$ of $G$, its vertex set and edge set are denoted by $V(H)$ and $E(H)$, respectively. 
A subgraph $H$ of $G=(V,E)$ is \emph{spanning} if $V(H)=V(G)$. 
In the $2$-Edge-Connected Spanning Subgraph problem ($2$-ECSS), 
we are given a graph $G=(V, E)$ and the objective is to find a $2$-edge-connected spanning subgraph $H$ of $G$ with the minimum number of edges (if one exists). 

In this paper, a spanning subgraph $H$ is often identified with its edge set $E(H)$. 
Let $H$ be a spanning subgraph (or an edge set) of $G$. 
A connected component of $H$ which is 2-edge-connected is called a \emph{2EC component of $H$}.  
A 2EC component of $H$ is called an \emph{$i$-cycle 2EC component} if it is a cycle of length $i$. 
In particular, a $3$-cycle 2EC component is called a \emph{triangle 2EC component}.  
A maximal $2$-edge-connected subgraph $B$ of $H$ is called a \emph{block} of $H$
if $|V(B)| \ge 3$ and $B$ is not a 2EC component. 
An edge $e \in E(H)$ is called a \emph{bridge} of $H$ if $H \setminus \{e\}$ has more connected components than $H$. 
A block $B$ of $H$ is called a \emph{leaf block} if $H$ has exactly one bridge incident to $B$, 
and an \emph{inner block} otherwise. 

Let $G=(V, E)$ be a graph. For an edge set $F \subseteq E$ and a vertex $v \in V$, 
let $d_F(v)$ denote the number of edges in $F$ that are incident to $v$. 
An edge set $F \subseteq E$ is called a \emph{$2$-matching} if $d_F(v) \le 2$ for any $v \in V$, 
and it is called a \emph{$2$-edge-cover} if $d_F(v) \ge 2$ for any $v \in V$.%
\footnote{Such edge sets are sometimes called \emph{simple} $2$-matchings and \emph{simple} $2$-edge-covers in the literature.}

\section{Algorithm in Previous Work}
\label{sec:previous}

 Since our algorithm is based on the well-developed $1.326$-approximation algorithm given by Garg, Grandoni, and Ameli~\cite{GGA}, we describe some of their results in this section.  

\subsection{Reduction to Structured Graphs}

 In the algorithm by Garg, Grandoni, and Ameli~\cite{GGA}, they first reduce the problem to the case when 
 the input graph has some additional conditions, where such a graph is called a $(5/4,\varepsilon)$-structured graph. 
 In what follows in this paper, let $\varepsilon > 0$ be a sufficiently small positive fixed constant, which will appear in the approximation factor.  
 In particular, we suppose that $0\le \varepsilon \le 1/24$, which is used in the argument in~\cite{GGA}. 
 We say that a graph $G=(V,E)$ is \emph{$(5/4,\varepsilon)$-structured} if it is $2$-vertex-connected, it contains at least ${2}/{\varepsilon}$ vertices, and 
 it does not contain the following structures: 
        \begin{itemize}
            \item \textbf{($5/4$-contractible subgraph)} a $2$-edge-connected subgraph $C$ of $G$ such that every $2$-edge-connected spanning subgraph of $G$ contains at least $\frac{4}{5}|E(C)|$ edges with both endpoints in $V(C)$;
            \item \textbf{(irrelevant edge)} an edge $uv\in E$ such that $G \setminus \{u,v\}$ is not connected; 
            \item \textbf{(non-isolating $2$-vertex-cut)} a vertex set $\{u,v\}\subseteq V$ of $G$ such that $G\setminus\{u,v\}$ has at least three connected components or has exactly two connected components, both of which contains at least two vertices.
        \end{itemize}
The following lemma shows that it suffices to consider  $(5/4,\varepsilon)$-structured graphs when we design approximation algorithms. 

\begin{lemma}[\mbox{Garg, Grandoni, and Ameli~\cite[Lemma 2.2]{GGA}}]
\label{lem:structured}
For $\alpha \ge \frac{5}{4}$, 
if there exists a deterministic polynomial-time $\alpha$-approximation algorithm for 2-ECSS on $(5/4,\varepsilon)$-structured graphs, 
then there exists a deterministic polynomial-time $(\alpha+2\varepsilon)$-approximation algorithm for 2-ECSS.
\end{lemma}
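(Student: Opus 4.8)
\quad The plan is to prove the lemma by a chain of size-decreasing reductions. Starting from an arbitrary instance $G$, we repeatedly detect one of the three forbidden structures (or the ``fewer than $2/\varepsilon$ vertices'' condition) and replace the current instance by one or several strictly smaller instances, recursing until every instance is either $(5/4,\varepsilon)$-structured, on which we run the assumed $\alpha$-approximation algorithm, or has fewer than $2/\varepsilon$ vertices, which we solve optimally by brute force in constant time; finally we lift the returned subgraphs back up through the reductions to a $2$-edge-connected spanning subgraph of $G$. For each reduction I would check three things: (i) the relevant structure is detectable in polynomial time; (ii) the resulting instance(s) are strictly smaller (in number of vertices, or of edges for a fixed vertex set), so the recursion has polynomial depth and total size and the whole procedure runs in polynomial time; and (iii) a near-optimal solution of the sub-instances lifts back with only a controlled increase in the number of edges. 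The extra additive term $2\varepsilon\cdot\mathrm{OPT}(G)$ will come entirely from (iii), amortized against the fact that any $(5/4,\varepsilon)$-structured leaf instance has at least $2/\varepsilon$ vertices, hence optimum at least $2/\varepsilon$ (any $2$-edge-cover, and in particular any $2$-edge-connected spanning subgraph, has at least $|V|$ edges).

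\noindent\textbf{The lossless reductions.}\quad If $G$ is not $2$-edge-connected, report infeasibility. If $G$ has a cut vertex $v$, split $G$ into the subgraphs $G_i$ induced by $\{v\}$ together with one connected component of $G\setminus v$; since a $2$-edge-connected graph sends at least two edges from $v$ into each such component, each $G_i$ is again a valid instance, the edge sets of the $G_i$ partition $E(G)$, an optimal solution of $G$ restricts to an optimal solution of every $G_i$, and a union of optimal solutions of the $G_i$ is an optimal solution of $G$, so this step is exactly lossless. If $C$ is a $5/4$-contractible $2$-edge-connected subgraph, contract $V(C)$ to a single vertex to get $G'$; using that contracting a vertex subset of a $2$-edge-connected graph keeps it $2$-edge-connected (parallel edges so created being harmless, cf.\ the footnote on simple graphs), one verifies $\mathrm{OPT}(G')\le \mathrm{OPT}(G)-\tfrac45|E(C)|$, while any solution $H'$ of $G'$ lifts to a solution of $G$ of size $|E(H')|+|E(C)|$ by reinstating all edges of $C$ and reattaching the (at least two) edges of $H'$ at the contracted vertex to their original endpoints in $V(C)$. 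Since $\alpha\ge 5/4$ we get $|E(H')|+|E(C)| \le \alpha\,\mathrm{OPT}(G')+|E(C)| \le \alpha\,(\mathrm{OPT}(G)-\tfrac45|E(C)|)+|E(C)| \le \alpha\,\mathrm{OPT}(G)$, so this reduction is also absorbed with no additive loss; this is precisely why the constant in ``$5/4$-contractible'' is chosen to match the threshold $5/4$ for $\alpha$. Whenever fewer than $2/\varepsilon$ vertices remain we stop and solve exactly.

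\noindent\textbf{The $2$-vertex-cut reductions.}\quad The two remaining structures, irrelevant edges and non-isolating $2$-vertex-cuts, are both handled by splitting along a $2$-vertex-cut $\{u,v\}$: letting $C_1,\dots,C_r$ (with $r\ge 2$) be the connected components of $G\setminus\{u,v\}$, form $G_i := G[V(C_i)\cup\{u,v\}]$, augmented (if needed) with the edge $uv$, recurse on each $G_i$, and reassemble by taking the union of the solutions and keeping a single copy of $uv$. Two points need care: (a) arguing that the $G_i$ are valid $2$-edge-connected instances and that the recombined subgraph is genuinely $2$-edge-connected and spanning in $G$ --- which requires, in particular, that the pieces be ``compatible'' in how they connect $u$ and $v$, and uses the $2$-vertex-connectivity of $G$ so that each $C_i$ attaches to both $u$ and $v$; and (b) bounding how many extra copies of $uv$ (and similar gadget edges) we pay for across the recursion. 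Here the absence of $5/4$-contractible subgraphs is what makes the accounting go through: one shows that if a split were costly, i.e.\ some $C_i$ could be attached to $\{u,v\}$ in an optimum of $G$ without an internal $u$-$v$ connection, then a subgraph built around $C_i$ would itself be $5/4$-contractible, contradicting the hypothesis; consequently each split incurs only $O(1)$ extra edges, which we charge to the $\ge 2/\varepsilon$ optima of the pieces it creates.

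\noindent\textbf{Main obstacle.}\quad The routine parts are the lossless reductions and the termination/running-time bookkeeping. I expect essentially all of the difficulty --- and the reason the specific constants $5/4$ and $2/\varepsilon$ appear --- to lie in part (iii) for the $2$-vertex-cut reductions: establishing simultaneously that (1) splitting preserves feasibility and the lift yields a genuine $2$-edge-connected spanning subgraph of $G$; (2) each split adds only a bounded number of edges; and (3) these additive costs, together with the cost of the exactly-solved small pieces, sum over the entire recursion to at most $2\varepsilon\cdot\mathrm{OPT}(G)$, the last point relying on the ``at least $2/\varepsilon$ vertices'' clause to cap the number of structured leaf pieces at $O(\varepsilon\,\mathrm{OPT}(G))$. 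Making the interaction between $5/4$-contractibility and the $2$-vertex-cut splits quantitative enough to drive this amortized charging argument is the heart of the proof.
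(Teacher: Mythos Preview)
The paper does not give a proof of this lemma at all: it is quoted verbatim from Garg, Grandoni, and Ameli~\cite[Lemma~2.2]{GGA} and used as a black box, so there is no ``paper's own proof'' to compare your attempt against.

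That said, your outline is broadly along the right lines --- iterated reductions (cut-vertex split, contraction of $5/4$-contractible subgraphs, $2$-vertex-cut split, brute force on tiny instances) with an amortized accounting of the additive error --- and your computation showing that the contractible-subgraph step is lossless precisely when $\alpha\ge 5/4$ is correct and is indeed the reason for the threshold. Two points, however, are not just ``details to fill in'' but real gaps. First, you assert that each forbidden structure ``is detectable in polynomial time''; for \emph{arbitrary} $5/4$-contractible subgraphs this is far from clear, since the defining condition quantifies over all $2$-edge-connected spanning subgraphs. In~\cite{GGA} this is handled by restricting attention to contractible subgraphs of \emph{bounded size}, which can be enumerated; your write-up should make this restriction explicit and check that it suffices. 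Second, the heart of the matter --- your point~(b) and the paragraph ``Main obstacle'' --- is the claim that each $2$-vertex-cut split costs only $O(1)$ extra edges and that these can be charged against the $\ge 2/\varepsilon$ optima of the leaf pieces. Your suggested mechanism (``if a split were costly then some piece would be $5/4$-contractible'') is plausible in spirit but is not an argument as stated; in particular you have not said what ``costly'' means quantitatively, nor exhibited the $2$-edge-connected subgraph that would witness contractibility, nor shown that the recursion creates at most $O(\varepsilon\,\mathrm{OPT})$ leaf pieces. This is exactly where the work in~\cite{GGA} lies, and your sketch does not yet supply it.
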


\subsection{Semi-Canonical Two-Edge-Cover}

A $2$-edge-cover $H$ of $G$ (which is identified with a spanning subgraph) is called \emph{semi-canonical} if it satisfies the following conditions. 
    \begin{enumerate}
    \renewcommand{\theenumi}{(\arabic{enumi})}
    \renewcommand{\labelenumi}{\theenumi}
        \item \label{canonical:2EC} Each 2EC component of $H$ is a cycle or contains at least $7$ edges.   
        \item \label{canonical:block} Each leaf block contains at least $6$ edges and each inner block contains at least $4$ edges.
        \item \label{canonical:triangle} There is no pair of edge sets $F \subseteq H$ and $F' \subseteq E \setminus H$ such that $|F| = |F'| \le 3$, $(H \setminus F) \cup F'$ is a $2$-edge-cover with fewer connected components than $H$, and $F$ contains an edge in some triangle 2EC component of $H$.
        \item \label{canonical:4cycle} There is no pair of edge sets $F \subseteq H$ and $F' \subseteq E \setminus H$ such that $|F| = |F'| = 2$, $(H \setminus F) \cup F'$ is a $2$-edge-cover with fewer connected components than $H$, 
        both edges in $F'$ connect two $4$-cycle 2EC components, say $C_1$ and $C_2$, and $F$ is contained in $C_1 \cup C_2$. In other words, by removing $2$ edges and adding $2$ edges, 
        we cannot merge two $4$-cycle 2EC components into a cycle of length $8$.  
    \end{enumerate}

\begin{lemma}[\mbox{Garg, Grandoni, and Ameli~\cite[Lemma 2.6]{GGA}}]
\label{lem:fewtriangles}
  Suppose we are given a semi-canonical $2$-edge-cover $H$ of a $(5/4,\varepsilon)$-structured graph $G$ with $b|H|$ bridges and $t|H|$ edges belonging to triangle 2EC components of $H$. 
  Then, in polynomial time, we can compute a $2$-edge-connected spanning subgraph $S$ of size at most $(\frac{13}{10}+\frac{1}{30}t-\frac{1}{20}b)|H|$.     
\end{lemma}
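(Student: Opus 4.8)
The plan is to transform $H$ into a $2$-edge-connected spanning subgraph by a sequence of local operations — adding a few edges at a time and occasionally deleting redundant ones — while keeping careful track of the number of edges. The natural accounting device is a credit scheme: to each connected component of $H$ I would assign an initial budget of credits proportional to the number of edges it contains, with extra credits released by bridges (reflecting that a bridge-heavy component is structurally wasteful, which is what the $-\frac{1}{20}b$ term buys us), and then show that the merging operations can be carried out so that the total number of newly added edges is paid for by the credits, yielding the claimed bound $(\frac{13}{10}+\frac{1}{30}t-\frac{1}{20}b)|H|$.

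First I would classify the components of $H$ according to the semi-canonical conditions \ref{canonical:2EC}--\ref{canonical:4cycle}: triangle $2$EC components, $4$-cycle $2$EC components, longer cycles, large ($\ge 7$-edge) $2$EC components, and ``complex'' components carrying bridges, leaf blocks and inner blocks. Condition \ref{canonical:block} guarantees that leaf blocks have at least $6$ edges and inner blocks at least $4$, and condition \ref{canonical:2EC} that non-cycle $2$EC components have at least $7$ edges; these size lower bounds are exactly what make the per-component merging overhead small relative to the component's own edge count. The delicate cases are the small cycles: a triangle contributes only $3$ edges but typically needs $2$ new edges to be stitched into the rest of the graph, which is the source of the $+\frac{1}{30}t$ term, whereas $4$-cycles are handled by invoking condition \ref{canonical:4cycle} to exclude the worst configuration (merging two $4$-cycles into an $8$-cycle for free being ruled out, one shows the remaining ways of absorbing them are cheap enough).

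Next I would exploit that $G$ is $(5/4,\varepsilon)$-structured. $2$-vertex-connectivity ensures that across any partition of $V$ there are at least two edges of $G$, so components can always be joined; the absence of irrelevant edges and of non-isolating $2$-vertex-cuts restricts how two components can interact, and the absence of $5/4$-contractible subgraphs forbids the configurations that would otherwise force a $2$-ECSS to spend many edges inside a single component (this is what prevents a bad instance from concealing a dense sub-instance on which the $\frac{13}{10}$ factor fails). With this in hand I would describe the merging procedure: repeatedly select a component, use crossing edges of $G$ to connect it to a distinct component or to the part already built, add those edges, and delete any edge of $H$ that has thereby become a bridge or is otherwise removable; each step strictly decreases the number of components, and the net edge change per step is bounded by the credits released, as verified case by case. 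Iterating to completion yields a connected spanning subgraph, after which a final ear-type augmentation eliminates any remaining bridges, with the guarantees that the relevant subgraphs have $\ge 7$, $\ge 6$, or $\ge 4$ edges keeping this last phase within budget.

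The main obstacle — and the part that absorbs essentially all of the work — is the amortized analysis for the small components: proving that triangle and $4$-cycle $2$EC components can be merged either ``for free'', by swapping $2$ edges of $H$ for $2$ edges of $E\setminus H$ in the spirit of conditions \ref{canonical:triangle} and \ref{canonical:4cycle}, or else at a cost fully covered by the $\frac{1}{30}t$ slack, even in instances where many such small components are mutually adjacent and adjacent to nothing else. This demands a careful case analysis of the local neighbourhoods of these components in $G$, using the structural restrictions to rule out precisely the configurations on which a naive credit scheme would run a deficit, and one must be especially vigilant that an operation performed on one component does not retroactively invalidate credits already spent elsewhere; making the scheme ``monotone'' in this sense is the technical heart of the argument.
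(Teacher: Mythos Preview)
The paper does not prove this lemma at all: it is quoted verbatim from Garg, Grandoni, and Ameli~\cite{GGA}, with only a remark that the hypothesis ``canonical'' in the original can be weakened to ``semi-canonical'' by inspecting their Lemmas~D.3, D.4, and D.11. So there is no proof in the present paper to compare your proposal against; the authors treat the statement as a black box.

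Your sketch is broadly in the spirit of what~\cite{GGA} actually does --- a credit/token scheme in which components are merged by local edge swaps, with the size lower bounds from conditions~\ref{canonical:2EC} and~\ref{canonical:block} ensuring that most components carry enough credit, and the triangle and $4$-cycle components handled by the special clauses~\ref{canonical:triangle} and~\ref{canonical:4cycle}. In that sense the approach is not wrong. But what you have written is a plan, not a proof: you explicitly defer the entire technical content (``the main obstacle\ldots absorbs essentially all of the work\ldots demands a careful case analysis''), and nothing you have said pins down the specific credit values, the precise merging rules, or why the accounting closes at exactly $\frac{13}{10}+\frac{1}{30}t-\frac{1}{20}b$ rather than some other constant. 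The actual argument in~\cite{GGA} occupies several pages of appendix and involves a particular two-phase procedure (gluing, then bridge-covering) with a bespoke potential function; reproducing it is well beyond a sketch of this length. If your intent is to cite the result, do so; if your intent is to reprove it, you will need to commit to concrete credit assignments and carry out the case analysis in full.
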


\begin{remark}
 In the original statement of~\cite[Lemma 2.6]{GGA}, $H$ is assumed to satisfy a stronger condition than semi-canonical, called canonical. 
 A $2$-edge-cover $H$ is said to be \emph{canonical} if it satisfies \ref{canonical:2EC} and \ref{canonical:block}
 in the definition of semi-canonical $2$-edge-covers, and 
 also the following condition: 
 there is no pair of edge sets $F \subseteq H$ and $F' \subseteq E \setminus H$ such that $|F| = |F'| \le 3$ and $(H \setminus F) \cup F'$ is a $2$-edge-cover with fewer connected components than $H$.
 However, one can see that the condition ``canonical'' can be relaxed to ``semi-canonical'' by following the proof of~\cite[Lemma 2.6]{GGA}; see the proofs of Lemmas D.3, D.4, and D.11 in~\cite{GGA}. 
\end{remark}

\section{Algorithm via Triangle-Free Two-Edge-Cover}

The idea of our algorithm is quite simple: we construct a semi-canonical $2$-edge-cover $H$ with no triangle 2EC components and then apply Lemma~\ref{lem:fewtriangles}.  
We say that an edge set $F\subseteq E$ is \emph{triangle-free} if there is no triangle 2EC components of $F$. 
Note that a triangle-free edge set $F$ may contain a cycle of length three that is contained in a larger connected component. 
In order to construct a semi-canonical triangle-free $2$-edge-cover, 
we use a polynomial-time algorithm for finding a triangle-free $2$-matching given by Hartvigsen~\cite{HartD}. 

\begin{theorem}[\mbox{Hartvigsen~\cite[Theorem 3.2 and Proposition 3.4]{HartD}}]
\label{thm:HartD}
    For a graph $G$, we can find a triangle-free $2$-matching in $G$ with maximum cardinality in polynomial time.  
\end{theorem}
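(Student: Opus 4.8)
The plan is to prove this by a primal augmenting-path algorithm, generalizing Edmonds' blossom algorithm for maximum matching so that it runs directly on $2$-matchings and respects the triangle-freeness constraint. One might first hope to reuse the classical reduction of simple $2$-matchings to matchings --- split each vertex $v$ into $d_G(v)$ ``port'' copies and $d_G(v)-2$ ``filler'' copies joined by a complete bipartite graph --- but this cannot work as a black box, since forbidding triangle components is not a local constraint and is not encoded by any such gadget. So instead I would maintain a triangle-free $2$-matching $F$, starting from $F=\emptyset$, and in each phase search for an \emph{$F$-augmenting structure}: a subgraph $A$ of $G$ whose symmetric difference with $F$ is again a triangle-free $2$-matching of size $|F|+1$. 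Since the components of a $2$-matching are paths and cycles, augmenting structures are assembled from $F$-alternating paths and walks as in matching theory; the genuinely new feature is that inserting edges can inadvertently close up a $3$-cycle component, so the catalogue of augmenting structures must also include alternating walks that interact with the triangle components of $F$ and with ``potential triangles'' (three-vertex paths $x,y,z$ in $F$ with $xz \in E$).

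To find such a structure in a phase, I would grow $F$-alternating forests rooted at the \emph{deficient} vertices (those $v$ with $d_F(v) < 2$), alternately extending along non-$F$ edges out of even vertices and along $F$-edges out of odd vertices, exactly as in the blossom algorithm. When two even vertices of the same tree become adjacent, an odd ``blossom'' is found and is \emph{shrunk} to a pseudo-vertex. The crucial difference is that here one needs a refined notion of blossom that also absorbs certain triangles and prescribed configurations in their neighbourhoods, and one must prove a lifting lemma: every augmenting structure in the shrunk graph expands to one in $G$, shrinking destroys no augmenting structure, and expansion never recreates a triangle component. One must also verify that the family of subgraphs that get shrunk is closed under the grow/shrink/expand operations, so the procedure is well defined and terminates. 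This bookkeeping --- identifying exactly which configurations around triangles must be shrunk, and checking the lifting lemma case by case --- is where essentially all the work lies and is the \emph{main obstacle}; it is also precisely the place where the analogous $C_{\le 4}$-free problem is not known to be tractable.

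Finally I would prove a deficiency-type min-max theorem for the maximum size of a triangle-free $2$-matching, an analogue of the Tutte--Berge formula with an extra term counting odd structures and triangles. When a phase ends without producing an augmenting structure, the even vertices, odd vertices, and shrunk sets of the terminal forests assemble into a dual object attaining this bound, certifying that the current $F$ is already maximum. Since each successful phase increases $|F|$ by one and $|F| \le |V|$, and each phase consists of polynomially many grow, shrink, and expand steps, the whole algorithm runs in polynomial time (Hartvigsen in fact obtains an $O(n^3)$ bound), which gives the theorem.
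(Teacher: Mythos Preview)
The paper does not prove this theorem at all: it is stated with attribution to Hartvigsen and used as a black box (note the citation ``Theorem 3.2 and Proposition 3.4'' in the statement). So there is no ``paper's own proof'' to compare against; the paper's entire treatment of this result is the citation.

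Your proposal goes far beyond what the paper does, by sketching the strategy of Hartvigsen's original argument. At that level your outline is broadly accurate: Hartvigsen does proceed by an Edmonds-style augmenting/shrinking algorithm adapted to $2$-matchings, with a refined blossom notion that handles triangles, and he does certify optimality via a Tutte--Berge-type min--max relation. However, what you have written is a plan, not a proof. You yourself flag the ``main obstacle'' --- identifying exactly which configurations must be shrunk and verifying the lifting/expansion lemmas case by case --- and then simply assert that it can be done. That is precisely the content of Hartvigsen's thesis, and it is notoriously long and delicate; without carrying it out (or at least stating the precise blossom definition and the precise invariants maintained, and indicating how the case analysis closes), the argument has a genuine gap at its core. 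For the purposes of the present paper, the appropriate ``proof'' is exactly what the paper does: cite Hartvigsen.
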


In Section~\ref{sec:trianglefree2ec}, we give an algorithm for finding a minimum triangle-free $2$-edge-cover with the aid of Theorem~\ref{thm:HartD}. 
Then, we transform it into a semi-canonical triangle-free $2$-edge-cover in Section~\ref{sec:canonical}. 
Using the obtained $2$-edge-cover, we give a proof of Theorem~\ref{thm:main} in Section~\ref{sec:proofmain}. 

\subsection{Minimum Triangle-Free Two-Edge-Cover}
\label{sec:trianglefree2ec}

As with the relationship between $2$-matchings and $2$-edge-covers (see e.g.~\cite[Section 30.14]{lexbook}), 
triangle-free $2$-matchings and triangle-free $2$-edge-covers are closely related to each other, 
which can be stated as the following two lemmas. 

\begin{lemma}
\label{lem:upper}
    Let $G=(V, E)$ be a connected graph such that the minimum degree is at least two and $|V| \ge 4$. 
    Given a triangle-free $2$-matching $M$ in $G$, in polynomial time, 
    we can compute a triangle-free $2$-edge-cover $C$ of $G$ with size at most $2|V|-|M|$. 
\end{lemma}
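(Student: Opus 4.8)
The plan is to build $C$ by starting from $C:=M$ and then adding edges of $G$ only at \emph{deficient} vertices, i.e.\ vertices $v$ with $d_M(v)<2$. Since $M$ is a $2$-matching, every component of $M$ is a path or a cycle, and because $M$ is triangle-free and $G$ is simple, every cycle component has length at least $4$; hence cycle components are already $2$-edge-connected and contain no deficient vertex. Counting edges through the components (a cycle component has as many edges as vertices, a path on $p$ vertices has $p-1$ edges) gives that the number of path components of $M$, counting an isolated vertex as a trivial path, is exactly $n-|M|$ with $n=|V|$. A path component of $M$ on $p$ vertices has deficiency at most $2$ in total: its two endpoints each miss one edge when $p\ge 2$, and its unique vertex misses two edges when $p=1$. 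Thus, if for each path component we can repair all its deficient vertices by adding at most $2$ edges of $G$, then $|C|\le |M|+2(n-|M|)=2n-|M|$, as required; that suitable edges exist at all is guaranteed by the minimum degree being at least $2$ (every deficient vertex has a $G$-neighbor to which it is not yet joined) together with simplicity of $G$.

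The real content is to perform the repairs so that no \emph{triangle $2$EC component} is created. The guiding observation is that inserting an edge between two \emph{distinct} components of the current graph can never create a triangle $2$EC component, since the inserted edge is then a bridge of the merged component; therefore the only dangerous operations add an edge inside one component, and such an operation yields a triangle $2$EC component only when it closes a path on exactly three vertices into a triangle that constitutes an entire component of $C$. First I would repair every path component one of whose deficient endpoints has a $G$-neighbor outside that component: route the repair edges so that they merge components, which is always safe; for a trivial component (isolated vertex $v$), add $v$'s two new edges to neighbours so that the result is not a triangle, again using the merging observation, unless $v$ has exactly two neighbours $w_1,w_2$ with $w_1w_2$ a current edge and $\{w_1,w_2\}$ a standalone $2$-vertex path component of $M$, in which case connectivity and $|V|\ge 4$ force some $w_i$ to have an outside $G$-neighbour $s$, and we repair using $vw_1,vw_2$ plus $w_is$, three edges paid for by the combined budget of the two path components. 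The remaining path components are ``pendant'': both endpoints see only vertices inside the component. If such a component has at least four vertices, any valid choice of at most two internal repair edges keeps the component on $\ge 4$ vertices, hence creates no triangle $2$EC component; if it has two vertices it would be a component of $G$, contradicting $|V|\ge 4$ and connectivity; so the only delicate case is a pendant path $v_0v_1v_2$ on three vertices, where $G$ restricted to $\{v_0,v_1,v_2\}$ is a triangle, $v_0,v_2$ have degree $2$ in $G$, and $v_1$ has an outside neighbour $r$. There we add $v_0v_2$ (repairing $v_0$ and $v_2$ but momentarily forming the triangle component $\{v_0,v_1,v_2\}$) and then $v_1r$ (enlarging the component to $\ge 4$ vertices, which permanently kills that triangle $2$EC component): two added edges, exactly the component's budget.

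The step I expect to be the main obstacle is exactly this last case analysis for short pendant path components and for the analogous isolated-vertex subcase, where the natural repair is forced and one must verify simultaneously that (i) the added edges stay within the per-component budget of two, (ii) the outcome is a legitimate $2$-edge-cover, and (iii) no triangle $2$EC component survives; all other cases collapse quickly to the bridge observation. Polynomial running time is clear, since the whole procedure is a single pass over the components of $M$ with only local checks.
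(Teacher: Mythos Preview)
Your argument is essentially correct and can be made rigorous, but it takes a genuinely different and more elaborate route than the paper's. You organise everything around the path components of $M$ (there are exactly $|V|-|M|$ of them), allocate a repair budget of two edges to each, and then run a case analysis---by component length and by whether the deficient endpoints see vertices outside the component---to show the repairs can always be done within budget without leaving a triangle 2EC component. This works, but the case analysis is delicate (as you yourself flag), and one must be careful about processing order since components merge as you go; in particular, the isolated-vertex subcase where you spend three edges against the combined budget of two components shows that the accounting is not purely per-component.

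The paper's proof bypasses all of this structure with a single greedy loop and a potential function. While $F$ is not a $2$-edge-cover, pick any deficient vertex $v$ and any edge $vw\in E\setminus F$; if $F\cup\{vw\}$ is triangle-free add it, and otherwise $F\cup\{vw\}$ has a triangle 2EC component $\{u,v,w\}$, which forces $d_F(v)=d_F(w)=1$, so add both $vw$ and one edge leaving $\{u,v,w\}$ (such an edge exists by connectivity and $|V|\ge 4$). The size bound comes from $g(F)=\sum_{x\in V}\max\{2-d_F(x),0\}$: each step adds $k\in\{1,2\}$ edges and decreases $g$ by at least $k$, so the total number of edges added is at most $g(M)-g(C)=2|V|-2|M|$, giving $|C|\le 2|V|-|M|$ with essentially no case analysis. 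Your component-budget inequality is really the same bound (total deficiency $=2(|V|-|M|)$) viewed globally; the paper just handles the triangle obstruction locally and uniformly instead of classifying components.
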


\begin{proof}
    Starting with $F=M$, we perform the following update repeatedly while $F$ is not a $2$-edge-cover:
    \begin{quote}
        Choose a vertex $v\in V$ with $d_F(v) < 2$ and an edge $vw\in E \setminus F$ incident to $v$. 
        \begin{enumerate}
            \renewcommand{\theenumi}{(\roman{enumi})}
            \renewcommand{\labelenumi}{\theenumi}
            \item \label{MtoC:1} If $F\cup \{vw\}$ is triangle-free, then add $vw$ to $F$. 
            \item \label{MtoC:2} Otherwise, $F\cup \{vw\}$ contains a triangle 2EC component with vertex set $\{u, v, w\}$ for some $u\in V$. 
            In this case, choose an edge $e$ connecting $\{u, v, w\}$ and $V \setminus \{u, v, w\}$, and add both $vw$ and $e$ to $F$. 
        \end{enumerate}
    \end{quote}
    If $F$ becomes a $2$-edge-cover, then the procedure terminates by returning $C = F$. 
    It is obvious that this procedure terminates in polynomial steps and returns a triangle-free $2$-edge-cover.  

    We now analyze the size of the output $C$. 
    For an edge set $F \subseteq E$, define $g(F) = \sum_{v \in V} \max \{2-d_F(v), 0\}$. 
    Then, in each iteration of the procedure, we observe the following: 
    in case \ref{MtoC:1}, one edge is added to $F$ and $g(F)$ decreases by at least one; 
    in case \ref{MtoC:2}, two edges are added to $F$ and $g(F)$ decreases by at least two, because $d_F(v) = d_F(w) = 1$ before the update.    
    With this observation, we see that $|C| - |M| \le g(M) - g(C) = \sum_{v\in V} (2-d_M(v))$, 
    where we note that $M$ is a $2$-matching and $C$ is a $2$-edge-cover. Therefore, it holds that 
    \begin{equation*}
        |C|\le |M|+\sum_{v\in V} (2-d_M(v))=|M|+(2|V|-2|M|)=2|V|-|M|, 
    \end{equation*}
    which completes the proof. 
\end{proof}

\begin{lemma}
\label{lem:lower}
    Given a triangle-free $2$-edge-cover $C$ in a graph $G = (V, E)$, in polynomial time, 
    we can compute a triangle-free $2$-matching $M$ of $G$ with size at least $2|V|-|C|$. 
\end{lemma}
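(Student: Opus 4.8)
The plan is to mirror the classical correspondence between $2$-matchings and $2$-edge-covers (as in~\cite[Section 30.14]{lexbook}), but being careful to preserve triangle-freeness throughout. Starting with $F = C$, I would repeatedly look for a vertex $v$ with $d_F(v) > 2$ and delete an edge incident to $v$, as long as doing so keeps $F$ a $2$-edge-cover; once no such deletion is possible we will have a set $F$ with $d_F(v) \ge 2$ everywhere and, for each edge $e \in F$, at least one endpoint $v$ of $e$ with $d_F(v) = 2$ (otherwise $e$ could be removed). The final step is to extract a $2$-matching $M \subseteq F$ by, in each connected component of $F$, removing a spanning-tree-like set of excess edges so that every degree drops to at most $2$; equivalently, in each component we keep at most $|V(\text{component})|$ edges.

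For the size bound, the key accounting is that a $2$-edge-cover restricted in this way has at most one "extra" edge per connected component beyond a spanning structure. More precisely, after the deletion phase $F$ has $d_F(v) \ge 2$ for all $v$, so $|F| \ge |V|$; and since each component $K$ of $F$ satisfies $\sum_{v \in V(K)} d_F(v) = 2|E(K)|$ with every degree at least $2$ and at least one degree equal to $2$ on every edge, one shows $|E(K)| \le |V(K)|$, i.e. each component is either a tree-plus-one-edge (a single cycle with trees hanging off, in fact a disjoint union where each component has exactly one cycle) — so $|F| \le |V|$ as well, giving $|F| = |V|$ exactly when every component has exactly one cycle, and in general $|F| \le |V| + (\text{number of deletions we could not perform})$. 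The cleaner route: track the potential $h(F) = \sum_{v} \max\{d_F(v) - 2, 0\}$, note $h(C) = 2|E| - \sum_v \min\{d_C(v),2\} \le 2|C| - 2|V|$ is wrong in general, so instead I would argue directly that each deletion reduces $|F|$ by $1$ and reduces $h(F)$ by at least $1$, and when we get stuck and must instead break a cycle, $|F|$ drops by $1$ without increasing the component count in a way that costs us; summing, $|M| \ge |C| - (|C| - |V|) \cdot(\text{something})$. I will present it via: delete edges greedily to reach $F$ with $|F| \le |V|$ is false — rather, reach $F$ where every component has at most one cycle, hence $|F| \le |V|$, and separately every deletion was "free" in that it only removed edges from a $2$-edge-cover, so $|M| = |F|$ and $|F| \ge |V| \ge 2|V| - |C|$ precisely when...

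Let me restate the clean argument I would actually write: define $M$ by processing each connected component of $C$ independently. In a component with vertex set $U$ and edge set $C_U$, we have $|C_U| \ge |U|$ since all degrees are $\ge 2$. Repeatedly, while some vertex has degree $\ge 3$, delete an incident edge whose removal keeps all degrees $\ge 2$ within the component if possible; such an edge exists as long as the component is not already a $2$-matching, because if every vertex of degree $\ge 3$ had all incident edges going to degree-$2$ vertices, a short cycle/path argument produces a removable edge — and crucially, deleting an edge never creates a triangle $2$EC component (removing edges cannot create new components that are triangles out of nothing, and any triangle appearing as a component would have had all three vertices of degree exactly $2$ already in $C$, contradicting triangle-freeness of $C$, unless the deletion isolates a pre-existing triangle, which we forbid by only deleting when degrees stay $\ge 2$ — wait, a triangle component has all degrees $=2$). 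The main obstacle, and the step I would be most careful about, is exactly this: ensuring the deletion process terminates at a genuine $2$-matching of size $\ge 2|V| - |C|$ while never producing a triangle $2$EC component. I expect the bound to come out as: each component of $C$ of size $c$ on $u$ vertices yields a triangle-free $2$-matching of size $\ge 2u - c$ inside it (by removing $c - u$ edges if the component is "cycle-rich", or by a more delicate count when removing an edge would force a triangle), and summing over components gives $|M| \ge 2|V| - |C|$. The delicate sub-case — when a component is, say, two triangles sharing structure or a theta-graph where one of the two cycles is a triangle — is where I would spend the real effort, using $2$-vertex-connectivity or the structure forced by triangle-freeness of $C$ to find a non-triangle-creating edge to delete.
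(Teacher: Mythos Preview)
Your proposal has the right outer frame---greedily delete edges from high-degree vertices and track the potential $h(F)=\sum_v\max\{d_F(v)-2,0\}$---but it never actually lands on a working argument, and the place where it fails is exactly the substantive part of the lemma.

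The concrete gap is the handling of triangle creation. You write that ``removing edges cannot create new components that are triangles out of nothing,'' but this is false: if $F$ contains a triangle $\{w,w_1,w_2\}$ attached to the rest of its component only by the edge $vw$, then deleting $vw$ isolates that triangle as a triangle 2EC component. You notice this (``wait, a triangle component has all degrees $=2$'') but then retreat into vagueness about ``delicate sub-cases'' and structural restrictions that are not available here---the graph $G$ in this lemma is arbitrary, not $(5/4,\varepsilon)$-structured or $2$-vertex-connected.

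What is missing is a simple redirection rule. Pick $v$ with $d_F(v)>2$ and any incident edge $vw$. If deleting $vw$ would create a triangle component $\{v,v_1,v_2\}$, delete $vv_1$ instead; this still drops $d_F(v)$ from above $2$, so $h$ decreases. If deleting $vw$ would create a triangle component $\{w,w_1,w_2\}$, then observe that $w$ must currently have $d_F(w)=3$ (it has degree $2$ after removing $vw$), so delete $ww_1$ instead; again $h$ decreases by at least one. In every case one edge is removed and $h$ drops by at least one, so $|C|-|M|\le h(C)-h(M)=h(C)=\sum_v(d_C(v)-2)=2|C|-2|V|$, giving $|M|\ge 2|V|-|C|$. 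Your proposal circles around this potential bound but never supplies the redirection step that makes it go through; without it the greedy deletion simply gets stuck.

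Also, two of your detours are dead ends: maintaining $F$ as a $2$-edge-cover throughout is the wrong target (you want a $2$-matching at the end, and intermediate states need not be $2$-edge-covers), and the ``each component has at most one cycle, so $|F|\le|V|$'' line is irrelevant to the bound $|M|\ge 2|V|-|C|$.
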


\begin{proof}
   Starting with $F=C$, we perform the following update repeatedly while $F$ is not a $2$-matching:
    \begin{quote}
        Choose a vertex $v\in V$ with $d_F(v) > 2$ and an edge $vw\in F$ incident to $v$. 
        \begin{enumerate}
            \renewcommand{\theenumi}{(\roman{enumi})}
            \renewcommand{\labelenumi}{\theenumi}
            \item If $F\setminus \{vw\}$ is triangle-free, then remove $vw$ from $F$.
            \item If $F\setminus \{vw\}$ contains a triangle 2EC component whose vertex set is $\{v, v_1, v_2\}$ for some $v_1, v_2 \in V$, then remove $v v_1$ from $F$. 
            \item \label{FtoM:3} If neither of the above holds, then $F\setminus \{vw\}$ contains a triangle 2EC component whose vertex set is $\{w, w_1, w_2\}$ for some $w_1, w_2 \in V$. 
            In this case, remove $ww_1$ from $F$. 
        \end{enumerate}
    \end{quote}
    If $F$ becomes a $2$-matching, then the procedure terminates by returning $M = F$. 
    It is obvious that this procedure terminates in polynomial steps and returns a triangle-free $2$-matching.  

    We now analyze the size of the output $M$. 
    For an edge set $F \subseteq E$, define $g(F) = \sum_{v \in V} \max \{d_F(v)-2, 0\}$. 
    Then, in each iteration of the procedure, we observe that one edge is removed from $F$ and $g(F)$ decreases by at least one,  
    where we note that $d_F(w) = 3$ before the update in case~\ref{FtoM:3}. 
    With this observation, we see that $|C| - |M| \le g(C) - g(M) = \sum_{v\in V} (d_C(v) -2)$, 
    where we note that $C$ is a $2$-edge-cover and $M$ is a $2$-matching. Therefore, it holds that 
    \begin{equation*}
        |M|\ge |C| - \sum_{v\in V} (d_C(v)-2) = |C| - (2|C|-2|V|)=2|V|-|C|, 
    \end{equation*}    
    which completes the proof. 
\end{proof}

By using these lemmas and Theorem~\ref{thm:HartD}, we can compute a triangle-free $2$-edge-cover with minimum cardinality in polynomial time. 

\begin{proposition}
\label{prop:trifree2cover}
    For a graph $G=(V,E)$, we can compute a triangle-free $2$-edge-cover of $G$ with minimum cardinality in polynomial time (if one exists).
\end{proposition}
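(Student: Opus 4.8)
The plan is to reduce the problem to the connected components of $G$ and, on each component, exploit the tight min--max relation between triangle-free $2$-matchings and triangle-free $2$-edge-covers furnished by Lemmas~\ref{lem:upper} and~\ref{lem:lower}, together with Hartvigsen's algorithm (Theorem~\ref{thm:HartD}).

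First I would settle existence. A triangle-free $2$-edge-cover of $G$ exists if and only if every connected component $C$ of $G$ admits one, and $C$ admits a triangle-free $2$-edge-cover precisely when $C$ has minimum degree at least two and $|V(C)| \ge 4$. Indeed, minimum degree at least two is necessary for any $2$-edge-cover; a connected component on at most three vertices with minimum degree at least two must be a triangle, whose only $2$-edge-cover is itself, which is a triangle $2$EC component; and conversely, when $C$ is connected with minimum degree at least two and $|V(C)| \ge 4$, applying Lemma~\ref{lem:upper} to the empty (triangle-free) $2$-matching already produces a triangle-free $2$-edge-cover of $C$. All these conditions are checkable in polynomial time, so the algorithm reports that no triangle-free $2$-edge-cover exists whenever some component violates them.

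Next, for each connected component $C$ with minimum degree at least two and $|V(C)| \ge 4$, I would run Hartvigsen's algorithm to obtain a maximum-cardinality triangle-free $2$-matching $M_C$ of $C$, and then apply Lemma~\ref{lem:upper} to $M_C$ to obtain, in polynomial time, a triangle-free $2$-edge-cover $D_C$ of $C$ with $|D_C| \le 2|V(C)| - |M_C|$. On the other hand, for any triangle-free $2$-edge-cover $D$ of $C$, Lemma~\ref{lem:lower} yields a triangle-free $2$-matching of $C$ of size at least $2|V(C)| - |D|$; since $M_C$ has maximum cardinality, this forces $|D| \ge 2|V(C)| - |M_C|$. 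Hence $D_C$ is a minimum triangle-free $2$-edge-cover of $C$, of size exactly $2|V(C)| - |M_C|$. Finally, since the connected components of any edge set $F \subseteq E$ are distributed among the connected components of $G$, an edge set $F$ is a triangle-free $2$-edge-cover of $G$ if and only if $F \cap E(C)$ is a triangle-free $2$-edge-cover of $C$ for every component $C$, and $|F| = \sum_C |F \cap E(C)|$; therefore $\bigcup_C D_C$ is a minimum triangle-free $2$-edge-cover of $G$, and the whole procedure runs in polynomial time.

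The only real subtlety, and hence the main point to get right, is the existence characterization together with the small-component and low-degree corner cases: one must check that failing the hypotheses of Lemma~\ref{lem:upper} genuinely rules out \emph{every} triangle-free $2$-edge-cover (not merely the particular construction), and that the componentwise decomposition is faithful for the triangle-free requirement, which it is because a triangle $2$EC component of $F$ necessarily lives inside a single connected component of $G$. The remaining content is a direct assembly of Theorem~\ref{thm:HartD} and Lemmas~\ref{lem:upper}--\ref{lem:lower}.
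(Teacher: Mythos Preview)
Your proposal is correct and follows essentially the same route as the paper: compute a maximum triangle-free $2$-matching via Hartvigsen's algorithm, convert it to a triangle-free $2$-edge-cover via Lemma~\ref{lem:upper}, and certify minimality via Lemma~\ref{lem:lower}. The paper dispatches the existence and component reduction in a single sentence (``It suffices to consider the case when $G$ is a connected graph such that the minimum degree is at least two and $|V|\ge 4$''), whereas you spell that reduction out in full, but the substance of the argument is identical.
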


\begin{proof}
    It suffices to consider the case when $G$ is a connected graph such that the minimum degree is at least two and $|V| \ge 4$. 
    Let $M$ be a triangle-free $2$-matching in $G$ with maximum cardinality, which can be computed in polynomial time by Theorem~\ref{thm:HartD}.
    Then, by Lemma~\ref{lem:upper}, we can construct a triangle-free $2$-edge-cover $C$ of $G$ with size at most $2|V|-|M|$. 

    We now show that $G$ has no triangle-free $2$-edge-cover $C'$ with $|C'| < 2|V| - |M|$. 
    Assume to the contrary that there exists a triangle-free $2$-edge-cover $C'$ of size smaller than $2|V| - |M|$. 
    Then, by Lemma~\ref{lem:lower}, we can construct a triangle-free $2$-matching $M'$ of $G$ with size at least $2|V|-|C'|$.
    Since $|M'| \ge 2|V| - |C'| > 2|V| - (2|V| - |M|) = |M|$, this contradicts that $M$ is a triangle-free $2$-matching with maximum cardinality. 
    Therefore, $G$ has no triangle-free $2$-edge-cover of size smaller than $2|V| - |M|$, which implies that $C$ is a triangle-free $2$-edge-cover with minimum cardinality. 
\end{proof}

\subsection{Semi-Canonical Triangle-Free Two-Edge-Cover}
\label{sec:canonical}

We show the following lemma saying that 
a triangle-free $2$-edge-cover can be transformed into a semi-canonical triangle-free $2$-edge-cover without increasing the size. 
Although the proof is almost the same as that of~\cite[Lemma 2.4]{GGA}, we describe it for completeness. 

\begin{lemma}
\label{lem:convert}
 Given a triangle-free $2$-edge-cover $H$ of a $(5/4, \varepsilon)$-structured graph $G = (V, E)$, 
 in polynomial time, we can compute a triangle-free $2$-edge-cover $H'$ of no larger size which is semi-canonical.   
\end{lemma}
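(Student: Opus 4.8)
The plan is to follow the proof of \cite[Lemma 2.4]{GGA} almost line by line, carrying triangle-freeness along as one extra invariant. Starting from $F := H$, I would repeatedly do the following while $F$ is not semi-canonical: pick a violated condition among~\ref{canonical:2EC}, \ref{canonical:block}, \ref{canonical:4cycle} and apply the corresponding local exchange of edge sets from \cite{GGA}, always arranging that the exchange does not increase $|F|$, keeps $F$ a triangle-free $2$-edge-cover, and strictly decreases an auxiliary potential (as in \cite{GGA}: order $2$-edge-covers first by size, then by the number of connected components, and then by a measure of their block structure). Since each term of the potential is polynomially bounded in $|V|$ and each exchange can be found by brute-force search over edge sets of constant size, the procedure runs in polynomial time and outputs the desired $H'$.

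The only genuine simplification relative to \cite{GGA} is that condition~\ref{canonical:triangle} is automatic in our setting: a triangle-free $2$-edge-cover has no triangle 2EC component, so the hypothesis ``$F$ contains an edge of some triangle 2EC component'' is never satisfied and no forbidden exchange exists. Thus it suffices to enforce \ref{canonical:2EC}, \ref{canonical:block}, and \ref{canonical:4cycle} while keeping $F$ triangle-free. Condition~\ref{canonical:4cycle} is immediate to handle: whenever two $4$-cycle 2EC components of $F$ admit the merging exchange, we perform it, which replaces the two $4$-cycles by an $8$-cycle, leaves $|F|$ unchanged, decreases the number of components by one, and plainly creates no triangle 2EC component. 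For \ref{canonical:2EC} and \ref{canonical:block} I would reuse the operations from the proof of \cite[Lemma 2.4]{GGA}, which, depending on the configuration, either re-route the offending 2EC component or block into a cheaper structure on the same vertex set, or merge it with an adjacent piece across an incident bridge, relying throughout on $G$ being $(5/4,\varepsilon)$-structured (no $5/4$-contractible subgraph, no irrelevant edge, no non-isolating $2$-vertex-cut).

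The step I expect to be the main obstacle, and the only place where the argument really departs from \cite{GGA}, is verifying that none of these operations creates a triangle 2EC component. For re-routing a 2EC component within its own vertex set this is automatic: a simple $2$-edge-connected graph that is not a cycle has a vertex of degree at least three and hence at least four vertices, so a spanning cycle of that vertex set has length at least four, and no spanning collection of vertex-disjoint simple cycles on four or five vertices contains a triangle. The delicate cases are the merge operations, in which one deletes an edge of a small block and adds one or two edges joining it to a neighbouring piece across a bridge; here I would show that the added edges can always be chosen so that no resulting component is a triangle. If, to the contrary, every admissible choice closed a triangle 2EC component on some three vertices $\{a,b,c\}$, then inspecting which of $a,b,c$ the deleted edges separate from the rest of $G$ would force either an added edge $pq$ with $G \setminus \{p,q\}$ disconnected (an irrelevant edge) or a non-isolating $2$-vertex-cut of $G$, contradicting that $G$ is $(5/4,\varepsilon)$-structured. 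Hence a triangle-free choice always exists, possibly after enlarging the exchange so that it absorbs one additional vertex, and combining this with the termination argument of the first paragraph completes the proof.
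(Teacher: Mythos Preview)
Your proposal is correct and follows essentially the same route as the paper: both iterate the local exchanges from \cite[Lemma~2.4]{GGA} on a triangle-free $2$-edge-cover, note that condition~\ref{canonical:triangle} is vacuous, and terminate via the lexicographic potential $(|H'|,\mathrm{cc}(H'),\mathrm{br}(H'))$. The only divergence is your justification that the merge operations preserve triangle-freeness: the paper simply runs the explicit case analysis (bowtie, $K_{2,3}$, small leaf block, small inner block) and observes in each case that the modified component still has at least four vertices, so your contradiction argument via irrelevant edges and non-isolating $2$-vertex-cuts---and the hedge about possibly ``enlarging the exchange''---are not needed.
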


\begin{proof}
    Recall that an edge set is identified with the corresponding spanning subgraph of $G$. 
    Starting with $H' = H$, while $H'$ is not semi-canonical we apply one of the following operations in this order of priority. 
    We note that $H'$ is always triangle-free during the procedure, and hence it always satisfies condition~\ref{canonical:triangle} in the definition of semi-canonical $2$-edge-cover. 
    \begin{enumerate}
        \item[(a)]  If there exists an edge $e \in H'$ such that $H' \setminus \{e\}$ is a triangle-free $2$-edge-cover, then remove $e$ from $H'$. 

        \item[(b)] If $H'$ does not satisfy condition~\ref{canonical:4cycle}, then
            we merge two $4$-cycle 2EC components into a cycle of length $8$ 
            by removing $2$ edges and adding $2$ edges. 
            Note that the obtained edge set is a triangle-free $2$-edge-cover that has fewer connected components. 

        \item[(c)]
            Suppose that condition~\ref{canonical:2EC} does not hold, i.e., there exists a 2EC component $C$ of $H'$ with fewer than $7$ edges that is not a cycle.
            Since $C$ is $2$-edge-connected and not a cycle, we obtain $|E(C)| \ge |V (C)| + 1$. 
            If $|V(C)|=4$, then $C$ contains at least $5$ edges and contains a cycle of length $4$, 
            which contradicts that (a) is not applied. 
            Therefore, $|V(C)| = 5$ and $|E(C)| = 6$. 
            Since operation (a) is not applied, $C$ is either a bowtie (i.e., two triangles that share a commmon vertex) or 
            a $K_{2,3}$; see figures in the proof of~\cite[Lemma 2.4]{GGA}. 
            \begin{enumerate}
                \item[(c1)] Suppose that $C$ is a bowtie that has two triangles $\{v_1, v_2, u\}$ and $\{v_3, v_4, u\}$. 
                If $G$ contains an edge between $\{v_1, v_2\}$ and $\{v_3, v_4\}$, then we can replace $C$ with a cycle of length $5$, which decreases the size of $H'$. 
                Otherwise, by the $2$-vertex-connectivity of $G$, there exists an edge $zw \in E \setminus H'$ such that $z \in V \setminus V(C)$ and $w \in \{v_1, v_2, v_3, v_4\}$. 
                In this case, we replace $H'$ with $(H' \setminus \{uw\}) \cup \{zw\}$. 
                Then, the obtained edge set is a triangle-free $2$-edge-cover with the same size, which has fewer connected components. 
                
                \item[(c2)] Suppose that $C$ is a $K_{2, 3}$ with two sides $\{v_1, v_2\}$ and $\{w_1, w_2, w_3\}$. 
                If every $w_i$ has degree exactly $2$, then every feasible $2$-edge-connected spanning subgraph contains all the edges of $C$, and hence $C$ is a $\frac{5}{4}$-contractible subgraph, 
                which contradicts the  assumption that $G$ is $(5/4, \varepsilon)$-structured. 
                If $G$ contains an edge $w_i w_j$ for distinct $i, j \in \{1, 2, 3\}$, then we can replace $C$ with a cycle of length $5$, which decreases the size of $H'$. 
                Otherwise, since some $w_i$ has degree at least $3$, 
                there exists an edge $w_i u \in E \setminus H'$ such that $i \in \{1, 2, 3\}$ and $u \in V \setminus V(C)$.  
                In this case, we replace $H'$ with $(H' \setminus \{v_1 w_i\}) \cup \{w_i u\}$. 
                Then, the obtained edge set is a triangle-free $2$-edge-cover with the same size, which has fewer connected components.                 
            \end{enumerate}

        \item[(d)]
            Suppose that the first half of condition~\ref{canonical:block} does not hold, i.e., there exists a leaf block $B$ that has at most $5$ edges. 
            Let $v_1$ be the only vertex in $B$ such that all the edges connecting $V(B)$ and $V \setminus V(B)$ are incident to $v_1$. 
            Since operation (a) is not applied, we see that $B$ is a cycle of length at most $5$. 
            Let $v_1, \dots , v_\ell$ be the vertices of $B$ that appear along the cycle in this order.
            We consider the following cases separately; see figures in the proof of~\cite[Lemma 2.4]{GGA}.
            \begin{enumerate}
                \item[(d1)] Suppose that there exists an edge $zw \in E \setminus H'$ such that $z \in V \setminus V(B)$ and $w \in \{v_2, v_\ell\}$.  
                    In this case, we replace $H'$ with $(H' \setminus \{v_1 w\}) \cup \{z w\}$.

                \item[(d2)] Suppose that $v_2$ and $v_\ell$ are adjacent only to vertices in $V(B)$ in $G$, which implies that $\ell \in \{4, 5\}$. 
                    If $v_2 v_\ell \not\in E$, then every feasible 2EC spanning subgraph contains four edges (incident to $v_2$ and $v_\ell$) with both endpoints in $V(B)$, 
                    and hence $B$ is a $\frac{5}{4}$-contractible subgraph, which contradicts the assumption that $G$ is $(5/4, \varepsilon)$-structured. 
                    Thus, $v_2 v_\ell \in E$. 
                    Since there exists an edge connecting $V \setminus V(B)$ and $V(B) \setminus \{v_1\}$ by the $2$-vertex-connectivity of $G$,
                    without loss of generality, we may assume that $G$ has an edge $v_3 z$ with $z \in V \setminus V(B)$. 
                    In this case, we replace $H'$ with $(H' \setminus \{v_1 v_\ell, v_2 v_3\}) \cup \{v_3 z, v_2 v_\ell\}$.
            \end{enumerate}
            In both cases, the obtained edge set is a triangle-free $2$-edge-cover with the same size. 
            Furthermore, we see that either (i) the obtained edge set has fewer connected components or
            (ii) it has the same number of connected components and fewer bridges. 
                 
        \item[(e)]
            Suppose that the latter half of condition~\ref{canonical:block} does not hold, i.e., there exists an inner block $B$ that has at most $3$ edges. 
            Then, $B$ is a triangle. Let $\{v_1, v_2,  v_3\}$ be the vertex set of $B$. 
            If there are at least two bridge edges incident to distinct vertices in $V(B)$, say $wv_1$ and $z v_2$, then 
            edge $v_1 v_2$ has to be removed by operation (a), which is a contradiction. 
            Therefore, all the bridge edges in $H'$ incident to $B$ are incident to the same vertex $v\in V(B)$. 
            In this case, we apply the same operation as (d). 
    \end{enumerate}

    We can easily see that each operation above can be done in polynomial time. 
    We also see that each operation decreases the lexicographical ordering of $(|H'|, {\rm cc}(H'), {\rm br}(H'))$, 
    where ${\rm cc}(H')$ is the number of connected components in $H'$ and 
    ${\rm br}(H')$ is the number of bridges in $H'$. 
    This shows that the procedure terminates in polynomial steps.  
    After the procedure, $H'$ is a semi-canonical triangle-free $2$-edge-cover with $|H'| \le |H|$, which completes the proof. 
\end{proof}

\subsection{Proof of Theorem~\ref{thm:main}}
\label{sec:proofmain}

    By Lemma~\ref{lem:structured}, in order to prove Theorem~\ref{thm:main}, 
    it suffices to give a $\frac{13}{10}$-approximation algorithm for 2-ECSS in $(5/4, \varepsilon)$-structured graphs 
    for a sufficiently small fixed $\varepsilon > 0$. 
    Let $G=(V, E)$ be a $(5/4, \varepsilon)$-structured graph. 
    By Proposition~\ref{prop:trifree2cover}, 
    we can compute a minimum-size triangle-free $2$-edge-cover $H$ of $G$ in polynomial-time. 
    Note that the optimal value ${\sf OPT}$ of 2-ECSS in $G$ is at least $|H|$, because 
    every feasible solution for 2-ECSS is a triangle-free $2$-edge-cover. 
    By Lemma~\ref{lem:convert}, $H$ can be transformed into a semi-canonical triangle-free $2$-edge-cover $H'$ with $|H'| \le |H|$. 
    Since $H'$ is triangle-free, by applying Lemma~\ref{lem:fewtriangles} with $H'$,   
    we obtain a $2$-edge-connected spanning subgraph $S$ of size at most $(\frac{13}{10}-\frac{1}{20}b)|H'|$, where $H'$ has $b|H'|$ bridges. 
    Therefore, we obtain 
    \[
    |S| \le \left( \frac{13}{10}-\frac{1}{20}b \right) |H'| \le \frac{13}{10} |H| \le \frac{13}{10} {\sf OPT}, 
    \]
    which shows that $S$ is a $\frac{13}{10}$-approximate solution for 2-ECSS in $G$. 
    This completes the proof of Theorem~\ref{thm:main}. \qed

\section{Concluding Remarks}
    In this paper, we have presented a $(1.3+\varepsilon)$-approximation algorithm for 2-ECSS,  
    which achieves the currently best approximation ratio. 
    We give a remark that our algorithm is complicated and far from practical, because 
    we utilize Hartvigsen's algorithm~\cite{HartD}, which is quite complicated.  
    Therefore, it will be interesting to design a simple and easy-to-understand approximation algorithm with (almost) the same approximation ratio as ours. 
    Another possible direction of future research is to further improve the approximation ratio by improving Lemma~\ref{lem:fewtriangles}.

\bibliographystyle{plain}
\bibliography{ref}

\end{document}